\newtheorem{theorem}{Theorem}
\newtheorem{definition}{Definition}
\newtheorem{lemma}[theorem]{Lemma}
\newtheorem{proposition}[theorem]{Proposition}
\newtheorem{corollary}{Corollary}
\newtheorem{conjecture}{Conjecture}
\def\Pr{\text{Pr}}
\def\bF{\mathbf{F}}
\def\bR{\mathbf{R}}
\def\cE{\mathcal{E}}
\def\cF{\mathcal{F}}
\def\cG{\mathcal{G}}
\def\cP{\mathcal{P}}
\def\cR{\mathcal{R}}
\def\cS{\mathcal{S}}
\def\cV{\mathcal{V}}
\def\cX{\mathcal{X}}
\def\N{\mathbb{N}}
\def\pl{\parallel}
\def\RTmin{R_{T}^{(\min)}}
\def\IXm{\mathbf{I}(X_{[m]})}
\def\beq{\begin{equation}}
\def\eeq{\end{equation}}
\title{Achieving SK Capacity in the Source Model: \\ When Must All Terminals Talk?}
\author{
\IEEEauthorblockN{Manuj Mukherjee$^\dag$} \and \IEEEauthorblockN{Navin Kashyap$^\dag$} \and \IEEEauthorblockN{Yogesh Sankarasubramaniam$^\ddag$}
}
\begin{document}

\maketitle

\renewcommand{\thefootnote}{}
\footnotetext{
\noindent $^\dag$M.\ Mukherjee and N.\ Kashyap are with the 
Department of Electrical Communication Engineering, 
Indian Institute of Science, Bangalore. Email: \{manuj,nkashyap\}@ece.iisc.ernet.in.

\smallskip

$^\ddag$Email: yogesh@gatech.edu
}

\begin{abstract}
 In this paper, we address the problem of characterizing the instances of the multiterminal source model of Csisz{\'a}r and Narayan in which communication from \emph{all} terminals is needed for establishing a secret key of maximum rate. We give an information-theoretic sufficient condition for identifying such instances. We believe that our sufficient condition is in fact an exact characterization, but we are only able to prove this in the case of the three-terminal source model. We also give a relatively simple criterion for determining whether or not our condition holds for a given multiterminal source model. 
\end{abstract}

\renewcommand{\thefootnote}{\arabic{footnote}}

\section{Introduction} \label{sec:intro}
We are concerned with the multiterminal source model of Csisz{\'a}r and Narayan \cite{CN04}, which can be briefly described as follows. There are a certain number, $m \ge 2$, of terminals, each of which observes a distinct component of a source of correlated randomness. The terminals must agree on a shared SK by communicating over a noiseless public channel. This key must be protected from a passive eavesdropper having access to the public communication. The SK capacity, which is the supremum of the rates of SKs that can be generated, has been characterized in various ways \cite{CZ10,CN04,NN10}. What is less well-understood is the nature of public communication that is needed to achieve SK capacity in this model. In a companion paper \cite{MK14}, we gave a lower bound on the minimum rate of communication required to generate a maximal-rate (i.e., capacity-achieving) SK, building upon the prior work of Tyagi \cite{Tyagi13} on the two-terminal model. In this paper, we address a related question: when must all $m$ terminals necessarily have to communicate in order to generate a maximal-rate SK? 

It is well known that, in order to generate a maximal-rate SK in the two-terminal model ($m=2$), it is sufficient for only one terminal to communicate \cite{AC93, Maurer93,CN04}. All this terminal has to do is convey its local observations to the other terminal at the least possible rate of communication required to do so. Thus, when $m=2$, it is \emph{never necessary} for both terminals to communicate to generate a capacity-achieving SK. Even when $m > 2$, there are examples wherein not all terminals need to communicate --- see remark following Theorem~1 in \cite{CN04}. However, as we will show in this paper, there are plenty of other examples where all terminals \emph{must} communicate in order to achieve SK capacity. We coin the term ``omnivocality'' to describe the state when all terminals communicate. The problem of interest to us is the following: \emph{characterize the instances of the multiterminal source model in which omnivocality is necessary for maximal-rate SK generation}. In this paper, we report partial progress made towards such a characterization. 

The paper is organized as follows. After establishing the required notation and background in Section~\ref{sec:prelim}, we give, in Section~\ref{sec:omni}, a sufficient condition under which omnivocality is necessary for achieving SK capacity in a source model with $m \ge 3$ terminals. This condition is satisfied, for example, in the case of the complete graph pairwise independent network (PIN) model of Nitinawarat and Narayan \cite{NN10}. We conjecture that our sufficient condition is also necessary, but at present, we can only prove this in the $m=3$ case. Finally, in Section~\ref{sec:singleton}, we give a useful criterion for checking whether or not our condition holds for a given source model.

\section{Preliminaries} \label{sec:prelim}

 Throughout, we use $\N$ to denote the set of positive integers. In the multiterminal source model \cite{CN04}, a set of $m\ge 2$ terminals, denoted by $[m]\triangleq\{1,2, \ldots, m\}$, has access to a source $(X_1^n,X_2^n,\ldots,X_m^n)$, $n \in \N$, where $X_i^n$ denotes $n$ i.i.d.\ copies of a random variable (rv) $X_i$ taking values in a finite set $\cX_i$. The rvs $X_1,X_2,\ldots,X_m$ are in general correlated, and for each $i \in [m]$, the $i$th terminal observes only the component $X_i^n$. For any subset $A\subseteq [m]$, we will use $X_A$ to denote the collection of rvs $(X_i:i \in A)$, and $p_{X_A}$ to denote their joint probability mass function. 

The terminals communicate through a noiseless public channel, any communication sent through which is accessible to all terminals and to potential eavesdroppers as well. The terminals communicate in a round-robin fashion, following the cyclic order $(1,2,\ldots,m)$. Any transmission sent by the $i$th terminal is a deterministic function of $X_i^n$ and all the previous communication. Formally, a \emph{valid communication} is a finitely-supported random vector $\mathbf{F} = (F_1,F_2,\ldots,F_r)$, $r \in \N$, with $F_j$ denoting a communication sent by the terminal $i \in [m]$ with $i \equiv j \!\! \pmod m$, and $H(F_j \mid F_1,\ldots,F_{j-1},X_i^n) = 0$. The \emph{rate} of the communication is taken to be $\frac{1}{n} \log_2|\cF|$, where $\cF$ is the finite set on which $\mathbf{F}$ is supported. Terminal $i \in [m]$ is said to be \emph{silent} if $F_j = 0$ (with probability $1$) for all $j \equiv i \!\! \pmod{m}$. An \emph{omnivocal} communication is one in which no terminal is silent. 

Given an $\epsilon > 0$, we say that an rv $U$ is \emph{$\epsilon$-recoverable} from an rv $V$ if there exists a function $g$ of $V$ such that $\Pr[U = g(V)] \ge 1 - \epsilon$. 
\begin{definition}
\label{def:SK}
For any $\epsilon > 0$, an \emph{$\epsilon$-SK} for $[m]$ is an rv $K = K^{(n)}(X_{[m]}^n)$, for some $n \in \N$, such that there exists a valid communication $\mathbf{F}$ with the following properties: \\
\qquad \emph{(i)} $I(K;\mathbf{F}) \le \epsilon$; and \\
\qquad \emph{(ii)} $K$ is $\epsilon$-recoverable from $(X_i^n,\mathbf{F})$ for each $i \in [m]$. \\
The \emph{rate} of this $\epsilon$-SK is given by $\frac{1}{n}H(K)$. 
\end{definition}

A real number $R\geq 0$ is an \emph{achievable SK rate} if for any $\epsilon > 0$, there exists an $\epsilon$-SK of rate greater than $R - \epsilon$. The \emph{SK capacity $C([m])$} is defined as the supremum of all achievable SK rates. The SK capacity can be expressed as \cite[Theorem~1.1]{CZ10} (see also \cite[Eq.~(26)]{CN04})
\begin{equation}
C([m]) = \IXm \triangleq \min_{\cP} \frac{1}{|\cP|-1} D\left(p_{X_{[m]}} \pl \prod_{A \in \cP} p_{X_A} \right)
\label{eq:I}
\end{equation}
the minimum being taken over all partitions $\cP$ of $[m]$, of size ${|\cP|} \ge 2$. The quantity $D(\cdot \| \cdot)$ denotes relative entropy, and for a partition $\cP = \{A_1,\ldots,A_k\}$, the notation $\prod_{A\in\cP}p_{X_A}$ represents the product $p_{X_{A_1}} \times \cdots \times p_{X_{A_k}}$. Note that when $m=2$, the quantity $\IXm$ defined in \eqref{eq:I} simply reduces to the mutual information $I(X_1;X_2)$. Thus, $\IXm$ should be viewed as a multiparty extension of mutual information. 

Before proceeding further, a couple of clarifications concerning Definition~\ref{def:SK} are needed. We have adopted the notion of \emph{strong secrecy} (property~(i) in the definition), as opposed to \emph{weak secrecy}, which only requires $\frac{1}{n}I(K;\bF) \le \epsilon$. All the results proved in this paper would hold just as well under either type of secrecy. In particular, our main result shows that omnivocal communication is necessary for achieving SK capacity if a certain condition on the singleton partition $\cS$ is satisfied. Our proof of this result relies only on the expression for SK capacity given in \eqref{eq:I}, which remains the same under both forms of secrecy \cite{CN04}, and on a theorem of Gohari and Anantharam \cite{GA10}, which is stated and proved under the weak secrecy notion. Thus, our proof in fact shows that omnivocal communication is necessary even under a weak secrecy requirement on SKs.

A second clarification concerning Definition~\ref{def:SK} is that, usually, the definition of an $\epsilon$-SK  includes an additional requirement that $K$ be almost uniformly distributed over its alphabet $\mathcal{K}$, i.e., $H(K) \ge \log |\mathcal{K}| - \epsilon$ \cite{CN04}. However, this can always be dropped without affecting SK capacity --- see e.g., the discussion on p.~3976 in \cite{GA10}.

As mentioned above, we make use of a result of Gohari and Anantharam \cite[Theorem~6]{GA10} in some of our proofs. To state this result, we explicitly define a \emph{weak $\epsilon$-SK} for $[m]$ to be an rv $K$ as in Definition~\ref{def:SK}, except that the strong secrecy condition~(i) is replaced by the weak secrecy condition, $\frac{1}{n}I(K;\bF) \le \epsilon$. Then, $R \ge 0$ is an \emph{achievable weak-SK rate} if for any $\epsilon > 0$, there exists a weak $\epsilon$-SK of rate greater than $R - \epsilon$. It is known that the supremum of achievable weak-SK rates is the same as the SK capacity given by \eqref{eq:I}. The Gohari-Anantharam result concerns achievable weak-SK rates under the additional assumption that some fixed subset of terminals remains silent throughout. Let $T \subseteq [m]$ be such that terminals in $T$ are allowed to communicate, while terminals in $[m] \setminus T$ must remain silent. Thus, we are restricted to valid communications $\bF$ in which the terminals in $[m] \setminus T$ are silent, but which allow all $m$ terminals to agree upon a weak SK. In other words, we only consider weak $\epsilon$-SKs for $[m]$ that are obtainable through valid communications $\bF$ in which all terminals in $[m] \setminus T$ are silent. The supremum of rates achievable by such SKs will be denoted by $C([m] \| T)$. 

\begin{theorem}[\cite{GA10}, Theorem~6]
$C([m] \| T) = H(X_{T})-R_{T}^{(\min)}$, where $\RTmin=\displaystyle\min_{\bR\in\cR_{T}} \sum_{i \in T} R_i$, the rate region $\cR_T$ being the set of all points $\bR = (R_i, i \in T)$ such that 
$$
\sum_{i\in A \cap T} R_i\geq H(X_{A \cap T}|X_{A^c}) \quad \forall \, A \subsetneq [m], \ A \cap T\neq\emptyset.
$$
\label{th:silent}
\end{theorem}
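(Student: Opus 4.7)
\noindent I plan to prove matching achievability and converse bounds on $C([m]\|T)$, mirroring Csisz\'ar--Narayan's argument for the special case $T=[m]$.

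\noindent\emph{Achievability.} First I pick a rate tuple $(R_i^*)_{i\in T}$ on the dominant face of $\cR_T$ with $\sum_{i\in T}R_i^*=\RTmin$. Each terminal $i\in T$ then performs random Slepian--Wolf binning of $X_i^n$ at rate $R_i^*$ and broadcasts the bin index. Taking $A=B\cup([m]\setminus T)$ in the defining constraints of $\cR_T$ yields $\sum_{i\in B}R_i^*\ge H(X_B\mid X_{T\setminus B})$ for every nonempty $B\subseteq T$. Since additional conditioning only decreases entropy, the same tuple automatically satisfies the Slepian--Wolf conditions for recovery of $X_T^n$ at \emph{every} terminal $j\in[m]$ from $(X_j^n,\bF)$, and a standard typicality argument then gives vanishing error probability at all terminals simultaneously. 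Finally I apply a universal hash to $X_T^n$ and invoke the leftover hash lemma to extract a key $K$ of rate arbitrarily close to $H(X_T)-\RTmin$ that is nearly uniform and nearly independent of $\bF$; every terminal computes the same hash on its recovered $X_T^n$.

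\noindent\emph{Converse.} For an arbitrary weak $\epsilon$-SK $K$ with valid $\bF$ (in which terminals outside $T$ are silent, so $\bF$ is a function of $X_T^n$), I will run the entropy chain
\begin{align*}
H(K) &\le H(K\mid\bF)+n\epsilon \\
     &\le H(X_T^n\mid\bF)+H(K\mid X_T^n,\bF)+n\epsilon \\
     &\le nH(X_T)-H(\bF)+n(\delta_n+\epsilon),
\end{align*}
which uses weak secrecy, the fact that $\bF$ is a function of $X_T^n$, and Fano applied to recoverability of $K$ at any terminal in $T$. It then remains to show $H(\bF)/n\ge\RTmin-o(1)$. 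To this end I will argue that if $R_i$ denotes the rate of terminal $i$'s aggregate transmissions in $\bF$, then the vector $(R_i)_{i\in T}$ asymptotically lies in $\cR_T$: for each $A\subsetneq[m]$ with $A\cap T\neq\emptyset$, recoverability of the nearly uniform $K$ at every terminal in $A^c$, combined with near-independence of $K$ from $\bF$, should force $\sum_{i\in A\cap T}R_i\ge H(X_{A\cap T}\mid X_{A^c})-o(1)$; summing these inequalities yields $H(\bF)/n\ge\RTmin-o(1)$ and completes the converse.

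\noindent\emph{Main obstacle.} The hard step is the rate-region containment inside the converse. The entropy chain bounding $H(K)+H(\bF)$ is routine, but the additional lower bound $H(\bF)/n\ge\RTmin-o(1)$ is delicate because $\bF$ need only enable recovery of the coarse key $K$, not of $X_T^n$ itself. Carrying this through will require leveraging weak secrecy together with the near-uniformity of $K$ in order to effectively embed a Slepian--Wolf-type converse inside the SK converse, in the same spirit as the derivation of $R_{CO}=H(X_{[m]})-C([m])$ from the partition bound in~\eqref{eq:I}.
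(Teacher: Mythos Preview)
The paper does not prove this theorem; it is quoted verbatim from \cite{GA10} and used as a black box. So there is no ``paper's own proof'' to compare against. That said, your proposal deserves comment on its merits.

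Your achievability sketch is fine and is indeed the standard route: Slepian--Wolf binning at rates in $\cR_T$ achieves omniscience of $X_T^n$ at all $m$ terminals (the constraint family $\{A=B\cup([m]\setminus T):B\subseteq T\}$ dominates the Slepian--Wolf constraints for every decoder), and privacy amplification then extracts $H(X_T)-\RTmin$ bits of key.

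Your converse, however, has a genuine gap, and it is not merely ``delicate'' --- the decomposition you propose is actually false. You split the target inequality $H(K)\le n\bigl(H(X_T)-\RTmin\bigr)+o(n)$ into (a) $H(K)+H(\bF)\le nH(X_T)+o(n)$ and (b) $H(\bF)\ge n\RTmin-o(n)$. Step~(a) is correct, but step~(b) fails already for $m=2$, $T=\{1,2\}$: here $\RTmin=H(X_1\mid X_2)+H(X_2\mid X_1)$, yet one-way Slepian--Wolf from terminal~$1$ at rate $H(X_1\mid X_2)$ achieves the SK capacity $I(X_1;X_2)=H(X_T)-\RTmin$ with $H(\bF)/n\approx H(X_1\mid X_2)<\RTmin$. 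In particular your claimed per-constraint inequality $\sum_{i\in A\cap T}R_i\ge H(X_{A\cap T}\mid X_{A^c})-o(1)$ is violated for $A=\{2\}$, since $R_2=0$ while $H(X_2\mid X_1)$ can be positive. The point is that there is no reason a capacity-achieving $\bF$ should have rate at least $\RTmin$; the quantity $\RTmin$ governs the \emph{key} rate, not the \emph{communication} rate.

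A correct converse cannot pass through a standalone lower bound on $H(\bF)$. One must instead bound $H(K)$ directly by exploiting, for each constraint (or convex combination of constraints) defining $\cR_T$, the recoverability of $K$ at terminals in $A^c$ together with weak secrecy, and then invoke LP duality to identify the tightest such bound with $H(X_T)-\RTmin$. This is what \cite{GA10} does (in a more general setting with helpers), and it is structurally different from the two-step argument you outline.
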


Note that if $C([m]) > C([m] \| T)$ for all $T \subset [m]$ of size ${|T|}=m-1$, then omnivocality is necessary for achieving SK capacity. Thus, our approach for showing that omnivocal communication is needed in certain cases is to use Theorem~\ref{th:silent} to prove that $C([m]) > C([m] \| T)$ for all $(m-1)$-subsets $T \subset [m]$. For this, we will need a lower bound on $\RTmin$ when ${|T|} = m-1$. To prove this bound, we use a simpler characterization (than that given in Theorem~\ref{th:silent}) of the rate region $\cR_T$ when ${|T|}=m-1$.

\begin{lemma} Let $T = [m] \setminus \{u\}$ for some $u \in [m]$. 
The rate region $\cR_{T}$ is the set of all points $(R_i, \, i \in T)$ such that
\beq
\sum_{i \in B} R_i \ge H(X_{B} | X_{T \setminus B}) \quad \forall \, B \subsetneq T, B \ne \emptyset,
\label{eq:lem}
\eeq
and 
$\displaystyle \sum_{i\in T} R_i \ge H(X_T| X_u)$.
\label{lem:Rm-1}
\end{lemma}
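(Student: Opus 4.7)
The plan is to derive \eqref{eq:lem} from the defining inequalities of Theorem~\ref{th:silent} by reparameterizing each $A \subsetneq [m]$ with $A \cap T \ne \emptyset$ in terms of $B := A \cap T$ together with the indicator of whether $u \in A$. Since $T = [m] \setminus \{u\}$, specifying $B$ and whether $u$ lies in $A$ determines $A$ uniquely, so I can split the original constraint family by a single dichotomy on $u \in A$ versus $u \notin A$.

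First I would handle the case $u \notin A$, where $A = B$ is a nonempty subset of $T$ (automatically $A \subsetneq [m]$) and $A^c = (T \setminus B) \cup \{u\}$. The Theorem~\ref{th:silent} constraint then reads $\sum_{i \in B} R_i \ge H(X_B \mid X_{T \setminus B}, X_u)$. Next I would treat the case $u \in A$, where $A = B \cup \{u\}$ with $B \ne \emptyset$; the condition $A \subsetneq [m]$ forces $B \subsetneq T$, and $A^c = T \setminus B$, so the constraint becomes $\sum_{i \in B} R_i \ge H(X_B \mid X_{T \setminus B})$, which is exactly the inequality \eqref{eq:lem}.

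Having split the constraints, I would observe that for every nonempty $B \subsetneq T$ both cases bound the same quantity $\sum_{i \in B} R_i$, and since conditioning reduces entropy, $H(X_B \mid X_{T \setminus B}) \ge H(X_B \mid X_{T \setminus B}, X_u)$. Therefore the $u \in A$ version dominates, and the $u \notin A$ version is redundant. The only situation in which this pairing degenerates is $B = T$: here $u \in A$ is disallowed because it would force $A = [m]$, while $u \notin A$ yields the single constraint $\sum_{i \in T} R_i \ge H(X_T \mid X_u)$, matching the trailing inequality of the lemma. Collecting the surviving constraints then gives precisely the description of $\cR_T$ claimed.

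I do not foresee any genuine obstacle; the argument is an accounting exercise resting on the single analytic fact that conditioning cannot increase entropy. The one point requiring care is the boundary case $B = T$, in which exactly one of the two subcases contributes a live constraint, so the inequality $\sum_{i \in T} R_i \ge H(X_T \mid X_u)$ must be stated separately rather than folded into \eqref{eq:lem}.
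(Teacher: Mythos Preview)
Your proposal is correct and follows essentially the same route as the paper's own proof: parameterize each admissible $A$ by $B = A \cap T$ and the bit ``$u \in A$ or not,'' then use the monotonicity of conditional entropy to discard the $u \notin A$ constraint for each nonempty $B \subsetneq T$, leaving only the $B = T$ case to supply the separate inequality $\sum_{i\in T} R_i \ge H(X_T \mid X_u)$.
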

\begin{IEEEproof}
Observe that $\cR_T$ is defined by constraints on sums of the form $\sum_{i \in B} R_i$ for non-empty subsets $B \subseteq T$. When $B = T$, the constraint is simply $\sum_{i \in T} R_i \ge H(X_T| X_u)$.

Now, consider any non-empty $B \subsetneq T$. From Theorem~\ref{th:silent}, we see that constraints on $\sum_{i \in B} R_i$ arise as constraints on $\sum_{i \in A \cap T} R_i$ in two ways: when $A = B$ and when $A = B \cup \{u\}$.  Thus, we have two constraints on $\sum_{i \in B} R_i$: 
$$
\sum_{i \in B} R_i \ge H(X_{B} | X_{[m] \setminus B}),
$$
obtained when $A = B$, and 
$$
\sum_{i \in B} R_i \ge H(X_{B} | X_{T \setminus B}),
$$
obtained when $A = B \cup \{u\}$. The latter constraint is clearly stronger, so we can safely discard the former.
\end{IEEEproof}

We can now prove the desired lower bound on $\RTmin$.

\begin{lemma}
Let $m \ge 3$ be given. For $T \subset [m]$ with ${|T|} = m-1$, we have 
$$
\RTmin \ge \frac{1}{m-2} \sum_{j \in T} H(X_{T \setminus \{j\}} | X_j).
$$
\label{lem:RTmin}
\end{lemma}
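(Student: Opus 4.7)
The plan is to exploit the simpler description of $\cR_T$ provided by Lemma~\ref{lem:Rm-1} and apply a straightforward double-counting argument. Write $T = [m] \setminus \{u\}$, and take any $\bR = (R_i, i \in T) \in \cR_T$. I would then specialize inequality~\eqref{eq:lem} to the subsets $B = T \setminus \{j\}$ as $j$ ranges over $T$; each such choice is a nonempty proper subset of $T$ (since $|T| = m-1 \ge 2$), so it yields a valid constraint
$$
\sum_{i \in T \setminus \{j\}} R_i \ge H(X_{T \setminus \{j\}} \mid X_j).
$$

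Next, I would sum these $m-1$ inequalities over $j \in T$. On the right-hand side we obtain exactly $\sum_{j \in T} H(X_{T \setminus \{j\}} \mid X_j)$, which is the sum we want to see. On the left-hand side, each rate $R_i$ (for $i \in T$) appears in the inner sum whenever $j \ne i$, that is, for $|T| - 1 = m-2$ values of $j$. Hence the left-hand side equals $(m-2) \sum_{i \in T} R_i$, and dividing by $m-2 \ge 1$ gives
$$
\sum_{i \in T} R_i \ge \frac{1}{m-2} \sum_{j \in T} H(X_{T \setminus \{j\}} \mid X_j).
$$
Since this bound holds for every $\bR \in \cR_T$, it holds in particular at the minimizer, yielding the claimed lower bound on $\RTmin$.

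There is really no obstacle here beyond making sure the arithmetic of the double counting is correct and that the constraint $B = T \setminus \{j\}$ is actually present in the simplified description of $\cR_T$; both are immediate, and the hypothesis $m \ge 3$ is exactly what is needed to keep $m-2$ positive. Note that the final constraint $\sum_{i \in T} R_i \ge H(X_T \mid X_u)$ from Lemma~\ref{lem:Rm-1} is not used in this argument, so the bound is obtained purely from the ``deficient'' subset constraints.
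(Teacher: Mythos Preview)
Your proof is correct and follows essentially the same approach as the paper: specialize the constraints of Lemma~\ref{lem:Rm-1} to the subsets $B = T \setminus \{j\}$, sum over $j \in T$, and use double counting to obtain $(m-2)\sum_{i \in T} R_i$ on the left-hand side. Your additional remarks about $m \ge 3$ ensuring $m-2 > 0$ and about the unused sum-rate constraint are accurate and add clarity.
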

\begin{IEEEproof}
Consider any $T \subset [m]$  with ${|T|} = m-1$. For each $j \in T$, let $B_j = T \setminus \{j\}$. Now, let $(R_i, i \in T)$ be any point in $\cR_T$. Applying \eqref{eq:lem} with $B = B_j$, we get
$$
\sum_{i \in B_j} R_i \ge H(X_{T\setminus \{j\}} | X_j),
$$
for each $j \in T$. Summing over all $j \in T$, we obtain
\beq
\sum_{j \in T} \sum_{i \in B_j} R_i \ge \sum_{j \in T} H(X_{T\setminus \{j\}} | X_j).
\label{RTmin_eqa}
\eeq
Exchanging the order of summation in the double sum on the left-hand side (LHS) above, we have $\sum_{j \in T} \sum_{i \in B_j} R_i = 
\sum_{i \in T} \sum_{j \in B_i} R_i  = \sum_{i \in T} (m-2)R_i  =  (m-2) \sum_{i \in T} R_i$.
Putting this back into \eqref{RTmin_eqa}, we get 
$$
\sum_{i \in T} R_i \ge \frac{1}{m-2} \sum_{j \in T} H(X_{T\setminus \{j\}} | X_j).
$$
Since this holds for any $(R_i, i \in T) \in \cR_T$, the lemma follows.
\end{IEEEproof}

\section{Omnivocal Communication} \label{sec:omni}

As pointed out in the Introduction, in the source model with two terminals, omnivocality is never necessary for generating a maximal-rate SK. However, the situation is different when there are three or more terminals. In this section, we give a sufficient condition for omnivocality being needed for achieving SK capacity when there are $m \ge 3$ terminals, and give an example where the sufficient condition is met. The sufficient condition also turns out to be necessary when there are exactly three terminals. 

To state our results, we need a few definitions. The partition $\bigl\{\{1\},\{2\},\ldots,\{m\}\bigr\}$ consisting of $m$ singleton cells will play a special role in our results; we call this the \emph{singleton partition} and denote it by $\cS$. For any partition $\cP$ of $[m]$ with ${|\cP|} \ge 2$, define
\begin{equation}
\Delta(\cP) \triangleq \frac{1}{|\cP|-1} \left[\sum_{A \in \cP} H(X_A) - H(X_{[m]}) \right].
\label{def:DP}
\end{equation}
Equivalently,
\begin{equation*}
\Delta(\cP) = \frac{1}{|\cP|-1} D\left(p_{X_{[m]}} \pl \prod_{A \in \cP} p_{X_A} \right), 
\end{equation*}
the notation being as in \eqref{eq:I}. Thus, $C([m]) = \mathbf{I}(X_{[m]}) = \min_{\cP}\Delta(\cP)$. In all that follows, we say that the singleton partition $\cS$ is a \emph{minimizer for $\IXm$} if $\Delta(\cS) = \IXm$, and that $\cS$ is the \emph{unique minimizer for $\mathbf{I}(X_{[m]})$}  if the minimum in \eqref{eq:I} is uniquely achieved by $\cS$, i.e., $\Delta(\cS) < \Delta(\cP)$ for all partitions $\cP$ of $[m]$, $\cP \ne \cS$, with ${|\cP|} \ge 2$. 

We can now state the main result of this section.

\begin{theorem}
For $m \ge 3$ terminals, if $\cS$ is the unique minimizer for $\mathbf{I}(X_{[m]})$,
then omnivocal communication is necessary for achieving the SK capacity $C([m])$. 
\label{th:mge3}
\end{theorem}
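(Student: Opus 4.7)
The plan is to use the reduction noted in the excerpt immediately after Theorem~\ref{th:silent}: it suffices to prove $C([m]) > C([m]\|T)$ for every $T \subset [m]$ with $|T| = m-1$. Fix $u \in [m]$ and set $T = [m]\setminus\{u\}$. Combining Theorem~\ref{th:silent} with Lemma~\ref{lem:RTmin} gives
$$C([m]\|T) = H(X_T) - \RTmin \le H(X_T) - \frac{1}{m-2}\sum_{j \in T} H(X_{T\setminus\{j\}} \mid X_j).$$
The chain rule $H(X_{T\setminus\{j\}} \mid X_j) = H(X_T) - H(X_j)$, applied termwise, collapses the right-hand side to
$$\alpha_T \triangleq \frac{1}{m-2}\Bigl[\sum_{j \in T} H(X_j) - H(X_T)\Bigr],$$
so it remains to show that $\Delta(\cS) > \alpha_T$ under the theorem's hypothesis.

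The central observation is that $\alpha_T$ and $\Delta(\cS)$ are linked by a clean algebraic identity involving one specific partition of $[m]$. Consider the two-cell partition $\cP^{(u)} \triangleq \{T,\{u\}\}$, for which $\Delta(\cP^{(u)}) = H(X_T) + H(X_u) - H(X_{[m]})$. Clearing the denominators $(m-1)$ and $(m-2)$ in the definitions of $\Delta(\cS)$ and $\alpha_T$ and using $\sum_{i \in [m]} H(X_i) = H(X_u) + \sum_{j \in T} H(X_j)$, a direct computation yields
$$(m-2)\bigl[\Delta(\cS) - \alpha_T\bigr] = \Delta(\cP^{(u)}) - \Delta(\cS).$$

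Once the identity is in hand the conclusion is immediate. Since $m \ge 3$, the partition $\cP^{(u)}$ has two cells while $\cS$ has $m$, so $\cP^{(u)} \ne \cS$; the uniqueness hypothesis then forces $\Delta(\cP^{(u)}) > \Delta(\cS)$. With $m - 2 > 0$, the identity gives $\Delta(\cS) > \alpha_T$, and chaining the inequalities yields $C([m]) = \Delta(\cS) > \alpha_T \ge C([m]\|T)$. Running this for every $u \in [m]$ completes the argument.

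The only non-routine step I expect to be delicate is choosing the right partition to compare against. The shape of $\alpha_T$ tempts one toward the singleton partition of $T$, but that is not a partition of $[m]$, and its natural extension (adjoining $\{u\}$) is $\cS$ itself, which gives nothing. The algebra above reveals that the correct comparison partition is instead the coarse two-cell lumping $\{T,\{u\}\}$ that separates the silent terminal from the rest.
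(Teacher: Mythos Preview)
Your proof is correct and follows essentially the same approach as the paper. The paper packages the comparison $\alpha_T < \Delta(\cS)$ (in its notation, $\Delta_T(\cS) < \Delta(\cS)$) into a separate lemma and derives the same identity $\Delta_T(\cS) - \Delta(\cS) = \frac{1}{m-2}[\Delta(\cS) - \Delta(\cP^{(u)})]$ via the intermediate form $\frac{m-1}{m-2}\Delta(\cS) = \Delta_T(\cS) + \frac{1}{m-2} I(X_u;X_T)$, but the substance of the argument---and in particular the choice of the two-cell partition $\{T,\{u\}\}$ as the comparison partition---is identical.
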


Before proving the theorem, we give an example where the condition of the theorem is met.

The pairwise independent network (PIN) model of Nitinawarat and Narayan \cite{NN10} is defined on an underlying graph $\cG = (\cV,\cE)$ with $\cV = [m]$. For $n \in \N$, let $\cG^{(n)}$ be the multigraph $(\cV,\cE^{(n)})$, where $\cE^{(n)}$ is the multiset of edges formed by taking $n$ copies of each edge of $\cG$. Associated with each edge $e \in \cE^{(n)}$ is a Bernoulli$(1/2)$ rv $\xi_e$; the rvs $\xi_e$ associated with distinct edges in $\cE^{(n)}$ are independent. With this, the rvs $X_i^n$, $i\in [m]$, are defined as $X_i^n=(\xi_e: e\in\cE^{(n)} \text{ and } e \text{ is incident on } i)$. When $\cG = K_m$, the complete graph on $m$ vertices, we have the \emph{complete graph PIN model}. 
 
We show in the next section (Corollary~\ref{cor:pin}) that for the complete graph PIN model, the singleton partition $\cS$ is the unique minimizer for $\IXm$. The result below then immediately follows from Theorem~\ref{th:mge3}.

\setcounter{corollary}{0}
\begin{corollary}
In the PIN model defined on the complete graph $K_m$, $m \ge 3$, omnivocal communication is necessary for achieving $C(X_{[m]})$.
\label{cor2}
\end{corollary}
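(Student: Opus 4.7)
My plan is to derive the corollary from Theorem~\ref{th:mge3}. Hence the only thing to check is that for the complete graph PIN model, the singleton partition $\cS$ is the unique minimizer of $\IXm$ in \eqref{eq:I}. The clean edge structure of $K_m$ with i.i.d.\ fair bits makes this a purely combinatorial calculation.

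For any $A \subseteq [m]$ in the PIN model on $K_m$, $X_A$ is the collection of bits on the edges of $K_m$ incident on $A$, so $H(X_A)$ equals the number of such edges. For a partition $\cP = \{A_1,\ldots,A_k\}$ with cell sizes $n_i = |A_i|$, each edge of $K_m$ is counted twice in $\sum_i H(X_{A_i})$ if it crosses $\cP$ and once if its endpoints share a cell, giving
\[
\sum_i H(X_{A_i}) - H(X_{[m]}) = \binom{m}{2} - \sum_i \binom{n_i}{2},
\]
and therefore $\Delta(\cP) = \bigl[\binom{m}{2} - \sum_i \binom{n_i}{2}\bigr]/(k-1)$. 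For $\cS$ this yields $\Delta(\cS) = m/2$. To show $\Delta(\cP) > m/2$ for every $\cP \ne \cS$ (so $2 \le k \le m-1$), I clear denominators to reduce the claim to $\sum_i \binom{n_i}{2} < \tfrac{1}{2}m(m-k)$. By convexity, the quantity $\sum_i \binom{n_i}{2}$ subject to $\sum n_i = m$ and $n_i \ge 1$ is maximized at the most skewed profile, i.e., one cell of size $m-k+1$ and $k-1$ singletons, yielding $\binom{m-k+1}{2} = \tfrac{1}{2}(m-k+1)(m-k)$. Since $m-k+1 < m$ whenever $k \ge 2$, the strict inequality follows, establishing uniqueness.

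Since Theorem~\ref{th:mge3} is used as a black box, I also sketch its proof for completeness. The approach foreshadowed in Section~\ref{sec:prelim} is to show $C([m]) > C([m] \| T)$ for every $T = [m] \setminus \{u\}$. Substituting $H(X_{T \setminus \{j\}} | X_j) = H(X_T) - H(X_j)$ into the bound in Lemma~\ref{lem:RTmin} yields
\[
C([m] \| T) = H(X_T) - \RTmin \le \frac{1}{m-2}\Bigl[\sum_{j \in T} H(X_j) - H(X_T)\Bigr].
\]
On the other hand, applying the unique-minimizer hypothesis to the two-cell partition $\cP^{\ast} = \{T,\{u\}\}$ gives $\Delta(\cS) < \Delta(\cP^{\ast}) = H(X_T) + H(X_u) - H(X_{[m]})$. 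A direct algebraic rearrangement of this inequality shows it is equivalent to $\Delta(\cS) > \tfrac{1}{m-2}\bigl[\sum_{j \in T} H(X_j) - H(X_T)\bigr]$, which strictly beats the upper bound on $C([m]\|T)$ above. Since $T$ was arbitrary, omnivocality is forced.

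The main obstacle I anticipate is extracting strictness: every non-singleton partition has to be ruled out, not merely the coarsest one or some clever coarsening. The convexity-based extremal argument handles all of them uniformly, and the strictness is forced by the mild condition $k \ge 2$, which is automatic from $|\cP| \ge 2$ and $\cP \ne \cS$.
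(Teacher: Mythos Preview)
Your proof is correct and follows the same overall strategy as the paper: invoke Theorem~\ref{th:mge3} after verifying that $\cS$ is the unique minimizer for $\IXm$ in the $K_m$ PIN model. The one methodological difference is in that verification step. The paper first proves a general reduction (Proposition~\ref{prop:min}) showing that it suffices to compare $\Delta(\cS)$ only against the ``one-big-cell-plus-singletons'' partitions $\cP_B$, and then computes $\Delta(\cP_B)=(2m-|B|-1)/2$ directly. You instead handle all partitions at once via the convexity of $n\mapsto\binom{n}{2}$, which identifies the worst case as exactly the profile $(m-k+1,1,\ldots,1)$---i.e., the same $\cP_B$ partitions. So your convexity step is effectively an ad~hoc proof of Proposition~\ref{prop:min}(b) specialized to $K_m$; it is more self-contained for this corollary, while the paper's route isolates a lemma reusable for arbitrary source models. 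Your supplementary sketch of Theorem~\ref{th:mge3} also matches the paper's argument (the paper packages the algebraic step as Lemma~\ref{lem:Delta}).
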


In conjunction with Theorem~6 in \cite{MK14}, we now have the following picture for a capacity-achieving communication in the complete graph PIN model: the communication must be omnivocal, and if it is constrained to be a linear function of the observations $X_{[m]}^n$, then it must have rate at least $m(m-2)/2$. It should be noted that the capacity-achieving communication in the proof of \cite[Theorem~1]{NN10} is an omnivocal, linear communication of rate $m(m-2)/2$.

\medskip
For the proof of Theorem~\ref{th:mge3}, we need some convenient notation. For $T \subset [m]$, ${|T|} = m-1$, define $\Delta_T(\cS) \triangleq \frac{1}{m-2}[\sum_{i \in T} H(X_i) - H(X_T)]$.

\begin{lemma}
For $m \ge 3$ terminals, if the singleton partition $\cS$ is the unique minimizer for $\IXm$, then $\Delta_T(\cS)  <  \Delta(\cS)$ for all $T \subset [m]$ with ${|T|} = m-1$.
\label{lem:Delta}
\end{lemma}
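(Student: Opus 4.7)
The plan is to show that the desired inequality $\Delta_T(\cS) < \Delta(\cS)$ is in fact algebraically equivalent to $\Delta(\cS) < \Delta(\cP_u)$, where $\cP_u$ is the two-cell partition $\{T, \{u\}\}$ of $[m]$ with $u$ denoting the unique element of $[m] \setminus T$. Once this equivalence is established, the conclusion is immediate: since $|T| = m-1 \ge 2$, the partition $\cP_u$ is distinct from $\cS$ and has size ${|\cP_u|} = 2 \ge 2$, so the uniqueness of $\cS$ as the minimizer for $\IXm$ yields $\Delta(\cS) < \Delta(\cP_u)$ directly.

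First, I would expand both $\Delta(\cS)$ and $\Delta_T(\cS)$ explicitly. Writing $S_T = \sum_{i \in T} H(X_i)$, the definitions give $(m-1)\Delta(\cS) = S_T + H(X_u) - H(X_{[m]})$ and $(m-2)\Delta_T(\cS) = S_T - H(X_T)$. Next, I would use the first equation to eliminate $S_T$ from the second, obtaining
\begin{equation*}
(m-2)\Delta_T(\cS) = (m-1)\Delta(\cS) - H(X_u) + H(X_{[m]}) - H(X_T).
\end{equation*}
Rearranging this identity, the inequality $\Delta_T(\cS) < \Delta(\cS)$ is seen to be equivalent to
\begin{equation*}
\Delta(\cS) < H(X_T) + H(X_u) - H(X_{[m]}) = \Delta(\cP_u),
\end{equation*}
where the right-hand side is exactly the value of $\Delta$ on the two-cell partition $\cP_u = \{T, \{u\}\}$ by the definition in \eqref{def:DP}.

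Finally, I would invoke the hypothesis that $\cS$ is the unique minimizer for $\IXm$. Since $m \ge 3$ forces $\cP_u \ne \cS$ (as $T$ has at least two elements), uniqueness yields $\Delta(\cS) < \Delta(\cP_u)$, and by the equivalence just derived we conclude $\Delta_T(\cS) < \Delta(\cS)$ for every $T \subset [m]$ with $|T| = m-1$. There is no real obstacle here; the only substantive step is recognizing that the quantity $\Delta_T(\cS)$, defined using only the restricted source $X_T$, can be rephrased as a comparison between $\Delta$ evaluated on $\cS$ and on the two-cell partition that isolates the missing terminal $u$.
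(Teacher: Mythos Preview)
Your proof is correct and follows essentially the same route as the paper: both arguments reduce the inequality $\Delta_T(\cS) < \Delta(\cS)$ to the comparison $\Delta(\cS) < \Delta(\cP_u)$ for the two-cell partition $\cP_u = \{T,\{u\}\}$, and then invoke the unique-minimizer hypothesis. The only cosmetic difference is that the paper packages the algebra via the identity $\frac{m-1}{m-2}\Delta(\cS) = \Delta_T(\cS) + \frac{1}{m-2}I(X_u;X_T)$ and the observation $I(X_u;X_T) = \Delta(\cP_u)$, whereas you eliminate $S_T$ directly.
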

\begin{IEEEproof}
For any $u \in [m]$, consider $T = [m] \setminus \{u\}$. Using $\Delta(\cS) = \frac{1}{m-1} [\sum_{i=1}^m H(X_i) - H(X_{[m]})]$ and the definition of $\Delta_T(\cS)$ above, it is easy to verify the identity
$$
{\textstyle
\frac{m-1}{m-2} \Delta(\cS) = \Delta_T(\cS) + \frac{1}{m-2} I(X_u;X_T).
}
$$
Re-arranging the above, we obtain
\begin{align}
\Delta_T(\cS) - \Delta(\cS) & = {\textstyle \frac{1}{m-2}}[\Delta(\cS) - I(X_u;X_T)] \notag \\
& = {\textstyle \frac{1}{m-2}} [\Delta(\cS) - \Delta(\cP)], \label{Ddiff}
\end{align}
where $\cP$ is the 2-cell partition $\bigl\{\{u\},T \bigr\}$ of $[m]$. By assumption, the expression in \eqref{Ddiff} is strictly negative.
\end{IEEEproof}

\medskip

With this, we are ready to prove Theorem~\ref{th:mge3}.

\begin{IEEEproof}[Proof of Theorem~\ref{th:mge3}]
We will show that $C([m]) > C([m] \| T)$ for any $T \subset [m]$ with ${|T|}=m-1$. First, note that since $\cS$ is, by assumption, a minimizer for $\IXm$, we have $C([m]) = \IXm = \Delta(\cS)$. Next, by Theorem~\ref{th:silent} and Lemma~\ref{lem:RTmin}, we have
\begin{align*}
C([m] \| T)  & \le H(X_T) - {\textstyle \frac{1}{m-2}} \sum_{i\in T}H(X_{T\setminus\{i\}}|X_i), \notag \\
& = {\textstyle \frac{1}{m-2}} \biggl[(m-2) H(X_T) - \sum_{i\in T} [H(X_T) - H(X_i)] \biggr] \notag \\
& = \Delta_T(\cS).
\end{align*}
Therefore, $C([m] \| T) \le \Delta_T(\cS) < \Delta(\cS) = C([m])$, the second inequality coming from Lemma~\ref{lem:Delta}. 
\end{IEEEproof}

\medskip

For the three-terminal source model, it turns out that the unique minimizer condition in Theorem~\ref{th:mge3} is also necessary for the conclusion of the theorem to hold. Note that when $m=3$, \eqref{eq:I} reduces to $C(X_{[3]}) = \min\bigl\{I(X_{\{1,2\}};X_3), I(X_{\{1,3\}};X_2),  I(X_{\{2,3\}};X_1),  \Delta(\cS)\bigr\}$; so the unique minimizer condition is equivalent to 
$$\Delta(\cS) < \min\{I(X_{\{1,2\}};X_3), I(X_{\{1,3\}};X_2), I(X_{\{2,3\}};X_1)\}.$$

\begin{theorem}
In the three-terminal source model, omnivocal communication is necessary for achieving SK capacity iff the singleton partition $\cS$ is the unique minimizer for $\IXm$.
\label{th:meq3}
\end{theorem}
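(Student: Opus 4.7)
The ``if'' direction is exactly Theorem~\ref{th:mge3}, so the real content is the ``only if'' direction, which I plan to establish by contrapositive. Suppose $\cS$ is \emph{not} the unique minimizer for $\IXm$. Since for $m=3$ the only partitions other than $\cS$ are the three two-cell partitions $\{\{u\}, T\}$ with $T = [3]\setminus\{u\}$, and $\Delta(\{\{u\}, T\}) = I(X_u; X_T)$, the hypothesis guarantees some $u \in [3]$ for which $I(X_u; X_T) \le \Delta(\cS)$; for this $u$ we have $C([3]) = I(X_u; X_T)$. The goal is then to exhibit a capacity-achieving scheme in which terminal $u$ is silent, equivalently, to prove that $C([3] \| T) = C([3])$ for this particular $T$.

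To evaluate $C([3] \| T)$, I will invoke Theorem~\ref{th:silent} together with Lemma~\ref{lem:Rm-1}. Writing $T = \{i, j\}$, the latter reduces $\cR_T$ to the set of $(R_i, R_j)$ satisfying the three constraints $R_i \ge H(X_i | X_j)$, $R_j \ge H(X_j | X_i)$, and $R_i + R_j \ge H(X_T | X_u)$. This is a two-variable linear program whose minimum-sum value, by direct inspection of cases, equals
$$
\RTmin = \max\bigl\{ H(X_i | X_j) + H(X_j | X_i), \; H(X_T | X_u) \bigr\}.
$$
Subtracting from $H(X_T)$ and using the identity $H(X_T) - H(X_i|X_j) - H(X_j|X_i) = I(X_i; X_j)$, I get
$$
C([3] \| T) = \min\bigl\{I(X_i; X_j), \; I(X_u; X_T)\bigr\}.
$$

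It then remains to check that, under the hypothesis $I(X_u; X_T) \le \Delta(\cS)$, the minimum above is attained by $I(X_u; X_T)$, i.e., $I(X_i; X_j) \ge I(X_u; X_T)$. I expect a short entropy expansion (in the same spirit as the identity used in Lemma~\ref{lem:Delta}) to yield
$$
I(X_i; X_j) - I(X_u; X_T) = 2\bigl[\Delta(\cS) - I(X_u; X_T)\bigr],
$$
whose right-hand side is nonnegative by assumption. That would give $C([3] \| T) = I(X_u; X_T) = C([3])$, completing the contrapositive. I do not anticipate a genuine obstacle here: the only real work lies in the two algebraic identities just cited, both of which are routine expansions in terms of $H(X_1), H(X_2), H(X_3)$ and $H(X_{[3]})$. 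The one conceptual point worth flagging is that these two identities are what makes the Gohari--Anantharam lower bound on $\RTmin$ \emph{tight} in this three-terminal regime, so the converse direction really does collapse cleanly, unlike the case $m \ge 4$ where the authors can only state a conjecture.
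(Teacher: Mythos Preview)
Your argument is correct and takes a genuinely different route from the paper. The paper establishes the ``only if'' direction constructively: it splits into two cases (depending on how many of the two-cell partitions beat $\Delta(\cS)$), and in each case exhibits an explicit capacity-achieving protocol via random binning plus the balanced coloring lemma \cite[Lemma~B.3]{CN04}, identifying which terminal(s) can stay silent. You instead appeal directly to the exact Gohari--Anantharam formula $C([3]\|T) = H(X_T) - \RTmin$ of Theorem~\ref{th:silent}, solve the resulting two-variable LP in closed form to get $C([3]\|T) = \min\{I(X_i;X_j),\, I(X_u;X_T)\}$, and finish with the single algebraic identity $I(X_i;X_j) - I(X_u;X_T) = 2[\Delta(\cS) - I(X_u;X_T)]$. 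Your route is cleaner---no case split, purely algebraic---but non-constructive; the paper's route, while longer, actually names the protocol, and in its Case~I even establishes the stronger fact that \emph{two} terminals can be silent.

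One small imprecision worth tightening: when you write ``for this $u$ we have $C([3]) = I(X_u;X_T)$,'' this is only immediate if you choose $u$ to minimize $I(X_v;X_{[3]\setminus\{v\}})$ over $v$, not merely some $u$ with $I(X_u;X_T) \le \Delta(\cS)$. Your argument survives regardless: once you have $C([3]\|T) = I(X_u;X_T)$, the chain $C([3]\|T) \le C([3]) \le I(X_u;X_T)$ (the first inequality trivial, the second because $I(X_u;X_T)$ is one of the terms in the minimum defining $C([3])$) forces equality throughout.
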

\begin{IEEEproof}
The ``if'' part is by Theorem~\ref{th:mge3}. For the ``only if'' part, suppose that $\Delta(\cS) \ge \min\{I(X_{\{1,2\}};X_3), I(X_{\{1,3\}};X_2), I(X_{\{2,3\}};X_1)\}$. Then, $\Delta(\cS)$ is either (a)~greater than or equal to at least two of the three terms in the minimum, or (b)~greater than or equal to exactly one term. Up to symmetry, it suffices to distinguish between two cases:

 Case I: $\Delta(\cS) \ge \max\{I(X_{\{1,2\}};X_3), I(X_{\{1,3\}};X_2)\}$

 Case II: $\min\{I(X_{\{1,3\}};X_2), I(X_{\{2,3\}};X_1)\} > \Delta(\cS) \ge I(X_{\{1,2\}};X_3)$ 

\noindent In each case, we demonstrate a capacity-achieving communication in which at least one terminal remains silent.

\medskip

We deal with Case~I first. Observe that $\Delta(\cS) = \frac12\left[\sum_{i=1}^3 H(X_i) - H(X_{[3]})\right]$ can also be written as $\frac12[I(X_1;X_2) + I(X_{\{1,2\}};X_3)]$. Thus, the assumption $\Delta(\cS) \ge I(X_{\{1,2\}};X_3)$, upon some re-organization, yields $I(X_1;X_2) \ge I(X_{\{1,2\}};X_3)$, i.e.,
\begin{equation}
I(X_1;X_2) \ge I(X_1;X_3) + I(X_2;X_3 | X_1).
\label{case1:eqa}
\end{equation}
 Similarly, using the identity $\Delta(\cS)=\frac12[I(X_1;X_3) + I(X_{\{1,3\}};X_2)]$ in the assumption $\Delta(\cS) \ge I(X_{\{1,3\}};X_2)$, we obtain $I(X_1;X_3) \ge I(X_{\{1,3\}};X_2)$, i.e.,
\begin{equation}
I(X_1;X_3) \ge I(X_1;X_2) + I(X_1;X_3 | X_2).
\label{case1:eqb}
\end{equation}
The equalities in \eqref{case1:eqa} and \eqref{case1:eqb} can simultaneously hold iff 
\begin{equation}
\begin{gathered}
I(X_1;X_2) = I(X_1;X_3) \ \ \text{ and } \\
I(X_1;X_3|X_2) = I(X_2;X_3|X_1) = 0.
\end{gathered}
\label{case1:eqc}
\end{equation}
From \eqref{case1:eqc}, it is not hard to deduce that the quantities $I(X_{\{1,2\}};X_3)$, $I(X_{\{1,3\}};X_2)$, $I(X_{\{2,3\}};X_1)$ and $\Delta(\cS)$ are all equal to $I(X_1;X_2)$. In particular, $C(X_{[3]}) = I(X_1;X_2)$.

From the first equality in \eqref{case1:eqc}, we also have $H(X_1|X_2) = H(X_1|X_3)$. Now, it can be shown by a standard random binning argument that there exists a communication from terminal $1$ of rate $H(X_1|X_2) = H(X_1|X_3)$ such that $X_1^n$ is $\epsilon$-recoverable at both terminals $2$ and $3$. It then follows from the ``balanced coloring lemma'' \cite[Lemma~B.3]{CN04} that an SK rate of $H(X_1) - H(X_1|X_2) = I(X_1;X_2)$ is achievable. Thus, the SK capacity, $C([3]) = I(X_1;X_2)$, is achievable by a communication in which terminals $2$ and $3$ are both silent.

\medskip

Now, consider Case~II, in which we obviously have $C([3]) = I(X_{\{1,2\}};X_3)$. The idea here is to show that a valid communication of rate $H(X_{\{1,2\}}|X_3)$ exists in which terminal $3$ is silent, and which allows $\epsilon$-recoverability of $(X_1^n,X_2^n)$ at all three terminals. Given this, an application of \cite[Lemma~B.3]{CN04} shows that an SK rate of $H(X_{\{1,2\}}) - H(X_{\{1,2\}}|X_3) = I(X_{\{1,2\}};X_3)$ is achievable. Thus, there is a $C([3])$-achieving communication in which terminal $3$ is silent. 

To show that the desired communication exists, we argue as follows. For $i=1,2$, let $R_i$ be the rate at which terminal $i$ communicates. A standard random binning argument shows that an achievable $(R_1,R_2)$ region, with terminal $3$ silent, for a communication intended to allow $\epsilon$-recoverability of $(X_1^n,X_2^n)$ at all terminals is given by 
\begin{equation}
\begin{gathered}
R_1 \ge H(X_1 | X_2), \ \ R_2 \ge H(X_2|X_1), \\
R_1+R_2 \ge H(X_{\{1,2\}}|X_3).
\end{gathered}
\label{case2:eqa}
\end{equation}

Now, using the assumption in Case~II that $\Delta(\cS) \ge I(X_{\{1,2\}};X_3)$, we will prove that the inequality 
\beq
H(X_1 | X_2) + H(X_2|X_1) \le H(X_{\{1,2\}}|X_3)
\label{case2:eqb}
\eeq
holds. It would then follow from \eqref{case2:eqa} that there exist achievable rate pairs $(R_1,R_2)$ with $R_1 + R_2 = H(X_{\{1,2\}}|X_3)$, thus completing the proof for Case~II. 

So, let us prove \eqref{case2:eqb}. We have $\Delta(\cS) =  \frac12[H(X_1) + H(X_2) + H(X_3) - H(X_{[3]})]$ and $I(X_{\{1,2\}};X_3) = H(X_{\{1,2\}})+H(X_3)-H(X_{[3]})$. Using these expressions in the inequality $\Delta(\cS) \ge I(X_{\{1,2\}};X_3)$, and re-arranging terms, we obtain
$$
\frac12[H(X_1)+H(X_2) - 2H(X_{\{1,2\}})] \ge \frac12[H(X_3) - H(X_{[3]})],
$$
which is equivalent to \eqref{case2:eqb}. This completes the proof of the theorem.
\end{IEEEproof}

We in fact conjecture that the result of Theorem~\ref{th:meq3} should extend to more than three terminals as well.

\begin{conjecture}
In the multiterminal source model with $m \ge 3$ terminals, omnivocal communication is necessary for achieving SK capacity iff the singleton partition is the unique minimizer for $\mathbf{I}(X_{[m]})$.
\label{conj}
\end{conjecture}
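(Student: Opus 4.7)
\textbf{Plan for Conjecture~\ref{conj}.} The ``if'' direction is immediate from Theorem~\ref{th:mge3}, so the content lies in the ``only if'' direction: assume $\cS$ is not the unique minimizer for $\IXm$, so that some non-singleton partition $\cP^* \ne \cS$ achieves $\Delta(\cP^*) = C([m])$. The plan is to exhibit, for some $u \in [m]$, a capacity-achieving communication in which terminal $u$ is silent, i.e., to establish $C([m] \| [m]\setminus\{u\}) = C([m])$, generalizing the two-case analysis used in the proof of Theorem~\ref{th:meq3}.

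The first sub-goal is a \emph{structural reduction}: find some $u$ such that the 2-cell partition $\{[m]\setminus\{u\},\{u\}\}$ is itself a minimizer for $\IXm$. From the identity $\Delta_T(\cS) - \Delta(\cS) = \frac{1}{m-2}[\Delta(\cS) - \Delta(\{T,\{u\}\})]$ derived in the proof of Lemma~\ref{lem:Delta} (with $T = [m]\setminus\{u\}$), this is precisely the condition under which the lower bound of Lemma~\ref{lem:RTmin} ceases to obstruct the equality $C([m]\|T) = C([m])$. If $\cP^*$ already contains a singleton cell $\{u\}$, the reduction is easy: coarsen the remaining cells of $\cP^*$ into a single block and verify that the resulting 2-partition still minimizes $\IXm$.

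Given such a $u$, set $T = [m]\setminus\{u\}$. The minimizing property $\Delta(\{T,\{u\}\}) = C([m])$ rearranges to $H(X_T \mid X_u) = H(X_T) - C([m])$, so by Theorem~\ref{th:silent} it suffices to exhibit a rate vector $\bR \in \cR_T$, as characterized in Lemma~\ref{lem:Rm-1}, with $\sum_{i \in T} R_i = H(X_T \mid X_u)$. The polymatroid structure of $\cR_T$ reduces this to a standard Edmonds-type feasibility question, which I would attempt to resolve via a Slepian--Wolf random binning argument along the lines of Case~II of Theorem~\ref{th:meq3}; the needed slack in the intermediate constraints on $\sum_{i \in B} R_i$ should follow from the hypothesis that $\{T,\{u\}\}$ minimizes $\IXm$.

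\textbf{Main obstacle.} The structural reduction is the crux, and presumably the reason the conjecture remains open for $m \ge 4$. For $m = 3$ every non-singleton partition automatically has a singleton cell, trivializing the reduction; but for $m \ge 4$ one may have, for example, $\cP^* = \{\{1,2\},\{3,4\}\}$ as the unique non-singleton minimizer on $[4]$, with no 2-partition of the form $\{[m]\setminus\{u\},\{u\}\}$ attaining the minimum. In that configuration the proposed reduction fails outright, and a different silent-terminal scheme --- perhaps one mimicking Case~I of Theorem~\ref{th:meq3} and exploiting conditional-independence relations forced by multiple tied minimizers --- would have to be built directly. Fully resolving the conjecture will likely require exploiting the lattice structure of $\IXm$-minimizing partitions, e.g., via the principal partition / Dilworth-truncation machinery for the entropy function, going beyond the tools developed in the current paper.
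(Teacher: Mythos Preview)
The paper does not prove this statement; it is stated as an open conjecture, with the authors remarking immediately afterward that they ``do not have a systematic approach for proving the `only if' part of the conjecture for $m \ge 4$.'' There is therefore no paper proof to compare your proposal against.

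Your proposal is, appropriately, a plan rather than a proof, and it correctly locates the obstruction. The $m=3$ argument of Theorem~\ref{th:meq3} succeeds because every non-singleton partition of $[3]$ has a singleton cell, so a 2-partition $\{\{u\},T\}$ is automatically available as a minimizer; for $m \ge 4$ a minimizer such as $\{\{1,2\},\{3,4\}\}$ need contain no singleton cell, and your structural reduction then fails exactly as you say. One caution: even when $\cP^*$ \emph{does} contain a singleton cell $\{u\}$, the claim that coarsening the remaining cells into $T=[m]\setminus\{u\}$ yields another minimizer is not quite ``easy''; it requires an argument (e.g., via the submodularity/Dilworth-truncation structure you allude to) and is not established anywhere in the paper. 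Overall, your diagnosis of why the conjecture resists the paper's methods matches the authors' own.
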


At this point, we do not have a systematic approach for proving the ``only if'' part of the conjecture for $m \ge 4$. 

\section{Singleton Partitions} \label{sec:singleton} 

The condition that the singleton partition be a unique minimizer for $\mathbf{I}(X_{[m]})$ plays a key role in the results of Section~\ref{sec:omni}. Thus, it would be very useful to have a way of checking whether this condition holds for a given source $X_{[m]}$, $m \ge 3$. The brute force method of comparing $\Delta(\cS)$ with $\Delta(\cP)$ for all partitions $\cP$ with at least two parts requires an enormous amount of computation. Indeed, the number of partitions of an $m$-element set is the $m$th Bell number, $B_m$, an asymptotic estimate for which is $(\log w)^{1/2} w^{m-w}e^w$, where $w = \frac{m}{\log m} \, [1+o(1)]$ is the solution to the equation $m = w \log(w+1)$ \cite[Example~5.4]{Odl95}. The proposition below brings down the number of comparisons required for verifying the unique minimizer condition to a ``mere'' $2^m-m-2$. 

For any non-empty subset $B=\{b_1,b_2,\ldots, b_{|B|}\}$ of $[m]$ with $|B| \ <m$, define $\cP_B\triangleq \bigl\{B^c,\{b_1\},\{b_2\},\ldots,\{b_{|B|}\}\bigr\}$ to be the partition of $[m]$ consisting of $|B|+1$ cells, of which $|B|$ cells are singletons comprising the elements of $B$. Note that if $|B|\ = m-1$, then $\cP_B = \cS$.
\begin{proposition}
For $m \ge 3$, let $\Omega = \{B \subset [m]: 1 \le |B| \, \le m-2\}$. The singleton partition $\cS$ is \\
\emph{(a)} a minimizer for $\mathbf{I}(X_{[m]})$ iff $\Delta(\cS) \le \Delta(\cP_B)$ $\forall\,B \in \Omega$; \\
\emph{(b)} the unique minimizer for $\mathbf{I}(X_{[m]})$ iff $\Delta(\cS) < \Delta(\cP_B)$ $\forall\,B \in \Omega$.
\label{prop:min}
\end{proposition}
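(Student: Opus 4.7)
The forward direction of both (a) and (b) is immediate: every $\cP_B$ with $B \in \Omega$ is a partition of $[m]$ with $|\cP_B| = |B|+1 \ge 2$, and $|B| \le m-2$ guarantees $\cP_B \ne \cS$, so if $\cS$ is a (unique) minimizer for $\IXm$, then $\Delta(\cS) \le \Delta(\cP_B)$ (resp.\ $\Delta(\cS) < \Delta(\cP_B)$) follows automatically. The substance of the proposition lies in the converse, where one must upgrade the restricted hypothesis ``$\Delta(\cS) \le \Delta(\cP_B)$ for all $B \in \Omega$'' to the same inequality against \emph{every} partition $\cP$ of $[m]$ with $|\cP| \ge 2$.

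My plan is to attach to an arbitrary such partition $\cP = \{A_1,\ldots,A_k\}$, $k \ge 2$, the family of ``peeled'' partitions $\cP_{B_i}$ obtained by choosing $B_i \triangleq [m]\setminus A_i$ for $i = 1,\ldots,k$; each $\cP_{B_i}$ retains $A_i$ as its sole non-singleton cell and splits the remainder of $[m]$ into singletons. I would then derive the identity
\beq
(k-1)\bigl[\Delta(\cP)-\Delta(\cS)\bigr] \;=\; \sum_{i=1}^{k} |B_i|\,\bigl[\Delta(\cP_{B_i})-\Delta(\cS)\bigr] \label{eq:keyid}
\eeq
by a direct computation from \eqref{def:DP}: expand each $\Delta$ as a sum of entropies, swap the order of summation in $\sum_{i}\sum_{b \in B_i} H(X_b) = (k-1)\sum_{b \in [m]} H(X_b)$ (each $b \in [m]$ lies in exactly $k-1$ of the $B_i$), and use $\sum_i |B_i| = (k-1)m$ to verify that the coefficients of $H(X_{[m]})$ on the two sides cancel.

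With \eqref{eq:keyid} in hand, both parts follow by a termwise analysis of the right-hand side. If $A_i$ is a singleton then $\cP_{B_i}=\cS$ and the $i$-th summand is identically $0$; if $|A_i| \ge 2$ then $1 \le |B_i| \le m-2$, so $B_i \in \Omega$ and the hypothesis of (a) supplies $\Delta(\cP_{B_i})-\Delta(\cS) \ge 0$ (with strict inequality under the hypothesis of (b)). Summing then yields $\Delta(\cP) \ge \Delta(\cS)$, proving (a). For (b), observe that $\cP \ne \cS$ forces at least one cell $A_i$ with $|A_i| \ge 2$, for which $|B_i| \ge 1$ (since $k \ge 2$); that summand is strictly positive, so $\Delta(\cP) > \Delta(\cS)$.

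The main obstacle I anticipate is spotting the correct decomposition, namely the choice $B_i = [m]\setminus A_i$ that expresses $\Delta(\cP)-\Delta(\cS)$ as a nonnegative linear combination of the controlled quantities $\Delta(\cP_{B_i})-\Delta(\cS)$. A secondary subtlety worth flagging in the write-up is the treatment of singleton cells of $\cP$: one might worry that $B_i = [m]\setminus\{i'\}$ has $|B_i|=m-1$ and hence lies \emph{outside} $\Omega$, so that the hypothesis is not directly available for that index; but precisely in this case $\cP_{B_i}=\cS$ and the term vanishes identically in \eqref{eq:keyid}, so the restricted hypothesis suffices. Once this bookkeeping point is clear, the remainder of the argument is pure algebra.
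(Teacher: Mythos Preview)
Your proposal is correct and follows essentially the same approach as the paper: your key identity \eqref{eq:keyid} is a straightforward rearrangement of the paper's identity $\sum_{A \in \cP} |A^c|\,\Delta(\cP_{A^c}) = (|\cP|-1)\bigl[\Delta(\cP) + (m-1)\Delta(\cS)\bigr]$ (using $\sum_i |B_i| = (k-1)m$), and the subsequent termwise analysis is the same. Your explicit treatment of the singleton-cell case---noting that such terms vanish identically rather than requiring the hypothesis---is a point the paper handles only implicitly, so your write-up is, if anything, slightly more careful there.
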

\begin{IEEEproof}
We prove~(b); for~(a), we simply have to replace the `$>$' in \eqref{ifeqa} below with a `$\ge$'. 
 
The ``only if'' part is obvious. For the ``if'' part, suppose that $\Delta(\cS) < \Delta(\cP_B)$ for all $B \subset [m]$ with $1 \le |B| \ \le m-2$. Consider any partition $\cP$ of $[m]$, $\cP \ne \cS$, with $|\cP|\ \ge 2$. We wish to show that $\Delta(\cP) > \Delta(\cS)$. 

The following identity can be obtained from the definition in \eqref{def:DP} by some re-grouping of terms:
$$
\sum_{A \in \cP} |A^c| \, \Delta(\cP_{A^c}) = (|\cP|-1)[\Delta(\cP) + (m-1)\Delta(\cS)].
$$
Thus, we have
\begin{align}
\Delta(\cP) &= \frac{1}{|\cP|-1} \sum_{A \in \cP} |A^c| \, \Delta(\cP_{A^c}) - (m-1)\Delta(\cS) \notag \\
& > \frac{1}{|\cP|-1} \sum_{A \in \cP} |A^c| \, \Delta(\cS) - (m-1)\Delta(\cS) \label{ifeqa} \\
& = m\Delta(\cS) - (m-1) \Delta(\cS) \ = \ \Delta(\cS). \label{ifeqb} 
\end{align}
The inequality in \eqref{ifeqa} is due to the fact that at least one $A \in \cP$ is not a singleton cell, so that $\cP_{A^c} \ne \cS$, and hence, $\Delta(\cP_{A^c}) > \Delta(\cS)$ by assumption. To verify the first equality in \eqref{ifeqb}, observe that $\sum_{A\in\cP}|A^c| = \sum_{A\in\cP}\sum_{i \notin A} 1 = \sum_{i=1}^m \sum_{A \in \cP: i \notin A} 1 = m(|\cP|-1)$.
\end{IEEEproof}

\medskip

Next, we apply the above proposition to some interesting special cases.

Random variables $X_1,X_2,\ldots,X_m$, $m \ge 2$, are called \emph{isentropic} if $H(X_A) = H(X_B)$ for any pair of non-empty subsets $A, B \subseteq [m]$ having the same cardinality. Equivalently, $X_1,\ldots,X_m$ are isentropic if, for all non-empty $A \subseteq [m]$, the entropy $H(X_A)$ depends only on $|A|$. One obvious consequence of this definition is that for disjoint non-empty subsets $A, B \subset [m]$, the conditional entropy $H(X_A|X_B)$ only depends on $|A|$ and $|B|$. 

Clearly, i.i.d.\ rvs are isentropic. More generally, exchangeable rvs are isentropic --- rvs $X_1,X_2,\ldots,X_m$ are \emph{exchangeable} if for all permutations $\sigma$ of $[m]$, the joint distribution of $(X_1,X_2,\ldots,X_m)$ is the same as that of $(X_{\sigma(1)},X_{\sigma(2)},\ldots,X_{\sigma(m)})$. However, isentropic rvs need not be exchangeable. It may be verified that the rvs $X_1,X_2,\ldots,X_m$ in the PIN model defined on the complete graph $K_m$ (as defined in Section~\ref{sec:omni}) are not exchangeable when $m \ge 3$, but they are isentropic.

\setcounter{corollary}{0}
\begin{corollary}
If $X_1,X_2,\ldots,X_m$ , $m \ge 3$, are isentropic rvs, then $\cS$ is a minimizer for $\mathbf{I}(X_{[m]})$. 
\label{cor:isent}
\end{corollary}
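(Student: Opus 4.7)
The plan is to apply Proposition~\ref{prop:min}(a), so it suffices to show $\Delta(\cS)\le\Delta(\cP_B)$ for every $B\subset[m]$ with $1\le|B|\le m-2$. Write $h_k$ for the common value of $H(X_A)$ over all $A\subseteq[m]$ with $|A|=k$; isentropy makes this well-defined. A quick computation gives
\begin{equation*}
\Delta(\cS)=\frac{m h_1-h_m}{m-1}, \qquad \Delta(\cP_B)=\frac{h_{m-k}+k h_1-h_m}{k},
\end{equation*}
where $k=|B|$, since $|\cP_B|-1=k$ and one cell of $\cP_B$ has size $m-k$. Cross-multiplying and collecting terms reduces $\Delta(\cS)\le\Delta(\cP_B)$ to the inequality
\begin{equation*}
(m-1)\,h_{m-k}\ \ge\ k\, h_1+(m-1-k)\, h_m, \qquad 1\le k\le m-2.
\end{equation*}
Setting $j=m-k$, this says that for every $j\in\{2,\ldots,m-1\}$ the point $(j,h_j)$ lies on or above the chord from $(1,h_1)$ to $(m,h_m)$, i.e.\ the sequence $h_1,h_2,\ldots,h_m$ is concave (in the sense that its finite differences are non-increasing).

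The main technical step, then, is to establish that the sequence $(h_k)_{k=1}^m$ is concave. For this I would appeal to the submodularity of entropy: for any $A,B\subseteq[m]$,
\begin{equation*}
H(X_{A\cup B})+H(X_{A\cap B})\le H(X_A)+H(X_B).
\end{equation*}
Choosing $A,B$ to be any two $k$-element subsets of $[m]$ with $|A\cap B|=k-1$ (so $|A\cup B|=k+1$) and invoking isentropy, this collapses to $h_{k+1}+h_{k-1}\le 2h_k$ for all $2\le k\le m-1$. Midpoint concavity of an integer-indexed sequence is equivalent to concavity, so the chord inequality above follows, and the corollary is proved.

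The only real obstacle is the concavity step; everything else is bookkeeping on the formula for $\Delta$. Once concavity of $(h_k)$ is in hand, the desired inequality is a standard chord/secant comparison. Note that submodularity is exactly what we need, and isentropy is exactly what lets us turn the two-set submodularity inequality into a one-variable statement about the sequence $(h_k)$; this suggests that the argument really does exploit both hypotheses in an essential way.
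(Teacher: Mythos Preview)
Your proof is correct and follows the same overall strategy as the paper: invoke Proposition~\ref{prop:min}(a) to reduce to the comparisons $\Delta(\cS)\le\Delta(\cP_B)$, and then establish the needed inequality via a monotonicity/concavity property of the entropy function under isentropy.

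The difference lies only in the packaging of the key technical step. The paper works with the conditional entropies $g(k)=H(X_{[k]}\mid X_{[m]\setminus[k]})=h_m-h_{m-k}$ and shows that $g(k)/k$ is non-decreasing; its proof (Appendix~A) uses conditional subadditivity, $H(X_{[k+1]}\mid X_{\{k+2,\dots,m\}})\le\sum_{i\le k+1}H(X_i\mid X_{\{k+2,\dots,m\}})$, together with isentropy. You instead work directly with the unconditional entropies $h_k$ and prove midpoint concavity $h_{k+1}+h_{k-1}\le 2h_k$ in one line from submodularity. The two statements are equivalent (since $g(k)/k$ is the slope of the secant of $h$ from $m-k$ to $m$), so the content is the same; your route via submodularity is a little slicker and makes the geometric ``above-the-chord'' interpretation explicit, while the paper's route keeps everything in terms of conditional entropies that match the $\delta(\cP)$ formulation.
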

\begin{IEEEproof}
For a partition $\cP$ of $[m]$ with $|\cP| \, \ge 2$, let us define
$$
\delta(\cP) \ \triangleq \ \frac{1}{|\cP|-1}\sum_{A\in \cP} H(X_{A^c}|X_A) \ = \ H(X_{[m]})-\Delta(\cP). 
$$
By virtue of Proposition~\ref{prop:min}(a), we need to show that $\delta(\cP_B) \le \delta(\cS)$ for all $B \in \Omega$.

For isentropic rvs, the quantity $H(X_B|X_{B^c})$, for any $B \subseteq [m]$, depends only on $|B|$. Hence, defining $g(k) \triangleq H(X_{[k]} | X_{[m]\setminus [k]})$ for $1 \le k \le m$, we can write $\delta(\cP_B) =  \frac{1}{|B|} g(|B|) + g(m-1)$ and $\delta(\cS) = \frac{m}{m-1}g(m-1)$.
Thus, we have to show that $\frac{g(|B|)}{|B|} \le \frac{g(m-1)}{m-1}$ for all $B \in \Omega$. This follows from the fact that for isentropic rvs, the function $g(k)/k$ is non-decreasing in $k$ --- see Appendix~A.
\end{IEEEproof}

Our second application of Proposition~\ref{prop:min} is to the PIN model. Recall from Section~\ref{sec:omni} that this model is defined on an underlying graph $\cG = ([m],\cE)$. From the way that the rvs $X_i$, $i \in [m]$, are defined, it is not difficult to verify that for any partition $\cP$ of $[m]$ with $|\cP|\, \ge 2$, we have
$$
\Delta(\cP) = \frac{|\cE(\cP)|}{|\cP|-1}, 
$$
where $|\cE(\cP)|$ denotes the number of edges $e=\{i,j\}\in \cE$ such that $i$ and $j$ are in different cells of $\cP$. This, in conjunction with Proposition~\ref{prop:min}, gives us a relatively simple criterion for verifying whether $\cS$ is a (unique) minimizer for $\mathbf{I}(X_{[m]})$. As an illustration, we apply this to the complete graph PIN model. 

\begin{corollary}
For the PIN model on the complete graph $K_m$, $m \ge 3$, the singleton partition $\cS$ is the unique minimizer for $\mathbf{I}(X_{[m]})$.
\label{cor:pin}
\end{corollary}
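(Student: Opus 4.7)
The plan is to invoke Proposition~\ref{prop:min}(b), which reduces the task to verifying the strict inequality $\Delta(\cS) < \Delta(\cP_B)$ for every $B \in \Omega$, i.e., for every $B \subset [m]$ with $1 \le |B| \le m-2$. Because we are in a PIN model on $\cG = K_m$, we can use the identity $\Delta(\cP) = |\cE(\cP)|/(|\cP|-1)$ stated just before the corollary, where $|\cE(\cP)|$ counts edges of $K_m$ whose two endpoints lie in distinct cells of $\cP$.

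First I would compute $\Delta(\cS)$. For the singleton partition, every edge of $K_m$ is a crossing edge, so $|\cE(\cS)| = \binom{m}{2}$ and $|\cS| - 1 = m-1$, giving $\Delta(\cS) = m/2$.

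Next I would compute $\Delta(\cP_B)$ for an arbitrary $B \in \Omega$ with $k \triangleq |B|$. The cells of $\cP_B$ are $B^c$ together with the $k$ singletons $\{b_1\}, \ldots, \{b_k\}$, so the non-crossing edges are precisely the edges with both endpoints in $B^c$. Hence $|\cE(\cP_B)| = \binom{m}{2} - \binom{m-k}{2}$ and $|\cP_B| - 1 = k$. A short expansion gives $\binom{m}{2} - \binom{m-k}{2} = k(2m-k-1)/2$, so
\[
\Delta(\cP_B) \ = \ \frac{2m-k-1}{2}.
\]

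Finally, the desired strict inequality $\Delta(\cS) < \Delta(\cP_B)$ reduces to $m/2 < (2m-k-1)/2$, i.e., $k < m-1$, which holds precisely because $|B| \le m-2$ for all $B \in \Omega$. Applying Proposition~\ref{prop:min}(b) concludes the proof. There is no real obstacle here; the only step needing care is the small algebraic simplification of $\binom{m}{2} - \binom{m-k}{2}$, and keeping the inequality strict (which is exactly what the constraint $|B| \le m-2$, as opposed to $|B| \le m-1$, is there to guarantee).
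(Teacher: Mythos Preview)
Your proposal is correct and follows essentially the same argument as the paper: compute $\Delta(\cP_B) = (2m-|B|-1)/2$ via the edge-counting formula, compare it to $\Delta(\cS) = m/2$, and invoke Proposition~\ref{prop:min}(b). The paper presents the same computation slightly more compactly, but there is no substantive difference.
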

\begin{proof}
It is easy to see that for any non-empty $B \subsetneq [m]$, 
$|\cE(\cP_B)| = \binom{m}{2} - \binom{|B^c|}{2} = \frac12 |B|(2m-|B|-1)$. Hence,
$$
\Delta(\cP_B) = \frac{|\cE(\cP_B)|}{|B|} = \frac{2m-{|B|}-1}{2} \ge \frac{m}{2} = \Delta(\cS),
$$
with equality iff ${|B|}=m-1$, i.e., $\cP_B = \cS$. The result now follows from Proposition~\ref{prop:min}(b).
\end{proof}

\section*{Appendix A}
Here, we prove that for isentropic rvs $X_1,\ldots,X_m$, the function $\frac{1}{k} H(X_{[k]} | X_{[m]\setminus [k]})$, defined for $1 \le k \le m$, is non-decreasing in $k$. Define $g(k) = H(X_{[k]} | X_{[m]\setminus [k]})$. We show that the difference $kg(k+1) - (k+1)g(k)$ is always non-negative, from which the result follows. 

We have $g(k+1) = H(X_{[m]}) - H(X_{\{k+2,\ldots,m\}})$ and $g(k) = H(X_{[m]}) - H(X_{\{k+1,\ldots,m\}}) = g(k+1) - H(X_{k+1}|X_{\{k+2,\ldots,m\}})$. Thus, 
\begin{align*}
kg(k+1) & - (k+1)g(k) \\
&= (k+1) \, H(X_{k+1}|X_{\{k+2,\ldots,m\}}) - g(k+1).
\end{align*}
We need to show that the RHS of the equality is non-negative. This is straightforward:
\begin{align*}
g(k+1) &= H(X_{[k+1]}|X_{\{k+2,\ldots,m\}}) \\
& \le \sum_{i=1}^{k+1} H(X_{i}|X_{\{k+2,\ldots,m\}}) \\
& = (k+1) H(X_{k+1}|X_{\{k+2,\ldots,m\}}),
\end{align*}
since, for $1 \le i \le k+1$, $H(X_{i}|X_{\{k+2,\ldots,m\}}) = H(X_{k+1}|X_{\{k+2,\ldots,m\}})$ by isentropy.

\end{document}